\newcommand\mkYcAoM[1][]{\mathcal{D}_{#1}} 
\newcommand\mpBykPY[1][]{\mathcal{U}_{#1}} 
\newcommand\mYsfxOR[1][X]{\mathcal{#1}}
\newcommand\mNjOrhQ[1][]{#1^{\star}}
\newcommand\mmHxeoG[1]{\operatorname{\mathcal{O}}(#1)}
\newcommand\majimMG[1]{|#1|}
\newcommand\mfBPXHb[1]{#1^\intercal} 
\newcommand\mPlzHde[1]{#1^+}
\def\norm#1{\lVert#1\rVert}
\newcommand\mGctqYz[1][X]{\mathcal{C}(#1)}
\newcommand\mCBuiuW[1]{2^{#1}} 
\newcommand\mRxFzBg[1][X]{\mathcal{T}(#1)}
\newcommand\mcrHMQW[1]{[\![#1]\!]}
\newcommand\mAoguFS[1]{\mathbf{#1}}
\newcommand\mijCQyI[1][]{V^{#1}}
\renewcommand\vec[1]{\mathbf{#1}}
\newcommand\mDpbDSN[1][]{W^{#1}}
\newcommand\mPmwBxY[1][]{J^{#1}}
\newcommand\mBuHESu[1][\delta]{\Delta_{#1}}
\newcommand\mNcTCKG[1][f]{\mathit{#1}}
\newcommand\mTorQlx[1]{\mathsf{#1}}
\newcommand\mORCysO[1][]{\mathcal{O}_{\mathsf{#1}}}
\newcommand\mQtakuY[1][x]{\bar{#1}}
\begin{document}
\title{Solvability of Approximate Reach-Avoid Games}
\author{Mario Gleirscher}
\maketitle
\label{l:1}

\begin{abstract}
  \emph{Objective:} In a companion paper, we propose a parametric
  hybrid automaton model and an algorithm for the online synthesis of
  robustly correct and near-optimal controllers for
  cyber-physical system 
  with reach-avoid guarantees. 
  A key part of this synthesis problem is based on a weighted
  discretised 
  game 
  and solved via scope-adaptive discrete dynamic programming.
  \emph{Approach:} This work examines proofs of key properties of the
  discussed algorithm.
  \emph{Evaluation:} The main proof is by induction over the stages of
  a discrete Hamilton-Jacobi-Bellman system of equations.
  \emph{Contribution:} The results include a game solvability theorem
  and identify necessary and sufficient conditions for its
  applicability.
\end{abstract}

\section{Introduction}

\paragraph{Motivation.}

Control of non-linear, disturbed dynamical systems can be solved by
constructing provably robust~(i.e., for safety) and
near-optimal~(i.e., for minimum-cost reachability)
controllers.  To accommodate parametric uncertainties, such
controllers have to be synthesised online, that is, during operation.

In a companion paper~\cite{Gleirscher2025-ParametricModelOptimal},
we formalise such a control problem as a parametric weighted hybrid game automaton~(HGA)
with reach-avoid winning conditions.  The automaton enables a flexible
three-tier separation of control concerns: supervisory (e.g., moving
obstacle evasion), tactical (e.g., near-optimal trajectory tracking),
and sub-tactical 
(e.g., local stability) control.
We provide a bounded solution algorithm for the discretised modal
games of the HGA.  The algorithm uses step-shielded discrete dynamic programming~(DDP) and
supports non- and quasi-stationary policies, increasing robustness to
disturbance 
at the cost of performance 
\cite{Lewis2012-ReinforcementLearningFeedback}.  
To prepare our algorithm for online use, we (slightly) reduce DDP's
curse-of-dimensionality problem by scope adaptation\footnote{The
  relevant fraction of the state space is selected according to the
  current system state and extended on-the-fly to increase game
  solvability.}  and fixpoint approximation for the winning region.
The modal synthesis algorithm is part of a HGA player.

\paragraph{Challenge and Research Question.}

Given a system described as a HGA, part of the system's
assurances can be based on the guarantees provided by the solutions of
the parametric modal games.  Based on our exhaustive numerical
algorithm for near-optimal online synthesis, under which assumptions
does the algorithm construct a correct solution providing a robust
reach-avoid guarantee?

\paragraph{Approach and Application.}

Our notion of approximation encompasses
a bound on the number of stages, 
a control robustness margin, 
as well as state space discretisation 
and dynamic scoping. 
We first examine necessary and sufficient conditions for game
solvability and then use these conditions to provide a solvability
proof.  The HGA, the algorithms (their {C\texttt{++} implementation), and
an experimental evaluation are discussed in more detail in
\cite{Gleirscher2025-ParametricModelOptimal}.

\paragraph{Contributions.}

In addition to our model and experiments in
\cite{Gleirscher2025-ParametricModelOptimal}, this work
\begin{itemize}
\item provides a \emph{solvability} (i.e., controller existence)
  theorem using a numerical Hamilton-Jacobi-Isaacs~(HJI) equation
  \cite{Tomlin2000-gametheoreticapproach} as a generic inductive
  invariant of the incrementally computed winning
  region 
  and
\item 
  identifies several conditions under which the synthesised 
  controllers guarantee $\delta$-robust bounded correctness, in
  particular, safety for at least $N$ stages and liveness by
  reaching goals within $N$ stages; both under rejection of
  delayed bounded disturbance.
\end{itemize}
Under the identified assumptions, we obtain a decision
procedure 
for approximate modal reach-avoid games using a DDP-based
$\mmHxeoG{N\majimMG{{\mathbb{X}}}\majimMG{\mpBykPY}\majimMG{\mkYcAoM}}$-algorithm.

\paragraph{Outline.}

\Cref{l:2} elaborates on the framework used in
\cite{Gleirscher2025-ParametricModelOptimal}.  In
\Cref{l:13}, we discuss the solvability
conditions for our reach-avoid synthesis with guarantees.  Based on
these conditions, \Cref{l:20} presents a solvability
theorem and a control barrier certificate.  The
\Cref{l:27,l:28} reflect on the approach
and summarise key findings.  Details on the proof are presented in
\Cref{l:29}.

\section{Preliminaries}
\label{l:2}

This section introduces terms and definitions and finishes with a
specification of the synthesis problem building on the solvability of
reach-avoid games.

\paragraph{Notation.} 

For a set of variables
$\mathit{Var}=X\uplus U\uplus D$,
let ${\mathbb{X}}\subset\ensuremath{\mathbb{Z}}^{X}$ be an $X$-typed,
finite,
$m$-dimensional Euclidean \emph{state space},\footnote{For a 3D
  coordinate $\vec p$, we use the typical naming convention
  $\vec p=\mfBPXHb{(x,y,z)}$.}  and let $\mpBykPY$ and $\mkYcAoM$
be $U$- and $D$-typed \emph{control} and
\emph{disturbance} ranges, each including $\vec 0$.
Moreover, let 
the classes of terms $\mRxFzBg[\mathit{Var}]$ 
and constraints
$\mGctqYz[\mathit{Var}]$ over $\mathit{Var}$.
$\norm\cdot$ is the corresponding 2-norm, $\langle\cdot\rangle$ the
scalar product, 
$A\oplus B=\{a+b\mid a\in A\land b\in B\}$ the Minkowski sum of
$A,B\subseteq{\mathbb{X}}$, and $A|_{\mathit{Var}'}$ the projection of
tuples 
in $A$ to variables in $\mathit{Var}'\subseteq\mathit{Var}$,\footnote{We
  omit set parentheses when referring to singleton sets 
  in subscripts.}  and $\dashrightarrow$ indicates a partial map.
Point-wise operators 
are lifted as usual.\footnote{For example, $\min,\max$ on vectors of
  sets are evaluated element-wise and return a vector of scalars.
  Negation of a set returns the set of negated elements.}  Moreover,
$\underline I$ and $\bar I$ denote the lower and
upper bounds of an interval $I\subseteq\ensuremath{\mathbb{R}}$ and
$\underline I..\bar I$ signifies that
$I\subseteq\ensuremath{\mathbb{Z}}$.
$\mcrHMQW\varphi_{\mathfrak{M}}$ denotes $\varphi$'s models in
domain~${\mathfrak{M}}$,\footnote{For example, the
  class of states, state pairs, sequences, or transition systems.}
formally,
$\mcrHMQW\varphi_{\mathfrak{M}} = \{M\in{\mathfrak{M}}\mid M\models\varphi\}$.

\paragraph{Trajectories.} 
\label{l:3}

Given the finite \emph{sequences} $\mNjOrhQ[{\mathbb{X}}]$ over
${\mathbb{X}}$, the concatenation $\mQtakuY\mQtakuY'$ of
$\mQtakuY,\mQtakuY'\in\mNjOrhQ[{\mathbb{X}}]$, the length $\majimMG\mQtakuY$ of
$\mQtakuY$, and its value $\mQtakuY_k$ at position
$k\in1..\majimMG\mQtakuY$,
we call $\mQtakuY$ a \emph{$\mBuHESu[]$-trajectory} if
$\forall k\in1..\majimMG\mQtakuY-1\colon \mQtakuY_{k+1}\in\mQtakuY_k\oplus
\mBuHESu[]$ with convex
$\mBuHESu[]\subseteq{\mathbb{X}}$ constraining changes per
transition.\footnote{For example,
  $\mBuHESu[]=\{-2,1,0,1,2\}^m=[-2,2]^m=[\pm 2]^m$ restricts changes to
  the discrete centered hyper-cube of length~5.}  For convenience,
$\mQtakuY$ is also used as the set of its states.  $\mQtakuY$ is said to
be \emph{${\mathbb{X}}$-continuous} 
if $\mBuHESu[]$ enforces the minimal observable change in all changed
${\mathbb{X}}$-components (e.g., $\mBuHESu[]=[\pm1]^m$ for
$\ensuremath{\mathbb{Z}}$).\footnote{It can be shown that trajectory
  ${\mathbb{X}}$-continuity is preserved under
  projection.} 
Let $\bar{\mathbb{X}}(\mBuHESu[]),\tilde{\mathbb{X}}\subset\mNjOrhQ[{\mathbb{X}}]$
be the classes of $\mBuHESu[]$-trajectories and ${\mathbb{X}}$-continuous
trajectories $\tilde{\mathbb{X}}=\bar{\mathbb{X}}([\pm 1]^m)$, respectively.
We write~$\bar{\mathbb{X}}$ if $\mBuHESu[]$ is clear from the context.
Given $\mQtakuY\in\bar{\mathbb{X}}(\mBuHESu[])$,
$\mathop\mathit{ct}(\mQtakuY) =
\arg\min_{\mQtakuY'\in\tilde{\mathbb{X}},\,\mQtakuY\subseteq\mQtakuY'}\norm{\mQtakuY'}$
provides a shortest continuous trajectory $\mQtakuY'\supseteq\mQtakuY$.
We obtain $\mQtakuY'\in\bar{\mathbb{X}}(\mBuHESu[])$ by the triangle
inequality. 

\paragraph{Weighted Hybrid Game Automata.}

Given a set of modes $Q$,
an 
alphabet~$A$, events
$E\subseteq
Q\times A\times Q$, and the hybrid
state space ${\mathcal{S}} = Q\times{\mathbb{X}}$,
a weighted HGA
$G =
(\mathit{Gra},\mathit{Var},\mathit{Ini},\mathit{Inv},\mNcTCKG,\mathit{Jmp},F)$
comprises a
\begin{inparaitem}[]
\item mode transition graph
  $\mathit{Gra} = (Q, A,$ $E)$,
\item initial conditions $\mathit{Ini}\colon Q\to\mGctqYz$,
\item invariants
  $\mathit{Inv}\colon Q\to\mathcal{C}(X)$,
\item flow constraints
  $\mNcTCKG\colon
  {\mathcal{S}}\to\mGctqYz[\mathit{Var}\cup\dot{X}]$,
\item jump conditions
  $\mathit{Jmp}\colon E\to\mGctqYz[\mathit{Var}\cup
  \mPlzHde{X}]$
  comprised of guards and updates, and
\item weighted 
  winning conditions
  $F\colon{\mathcal{S}}\to\mCBuiuW{\mRxFzBg[\mathit{Var}]}$,
  generalising
  final conditions $\mathit{Fin}\colon Q\to\mGctqYz$
  \cite{Henzinger2000-theoryhybridautomata}.
\end{inparaitem}
The copies $\dot{X}$ and $\mPlzHde{X}$ refer to the time
derivatives and discrete updates of $X$.  $\operatorname{grd}(e)$ and
$\operatorname{upd}(e)$ denote the guard and update conditions of
$\mathit{Jmp}(e)$.  In the remainder, we frequently use
${\vec{s}}=(q,{\vec{x}})\in{\mathcal{S}}$ for a state
of~$G$.

\paragraph{Integer Difference Games.}

Let $\Pi_{\mYsfxOR[Y]}={\mathbb{X}} 
\times\mathbb{N}_+\dashrightarrow\mYsfxOR[Y]$ be a \emph{strategy space} for a
player with alphabet~$\mYsfxOR[Y]$.
Given a \emph{scope} $\mathcal{X}\subseteq{\mathbb{X}}$, a flow
constraint~$\mNcTCKG$,
and its discretised Taylor
expansion~$\hat\mNcTCKG$,
a state ${\vec{s}}$ of~$G$ can be associated with a
discretised
two-player game
$G_{\vec{s}}=(\mathcal{X},
\hat\mNcTCKG,F)$.\footnote{Players share the
  state and time but do not know each others' future
  actions.}  We use ${\vec{u}}\colon\Pi_{\mpBykPY}$ and
${\vec{d}}\colon\Pi_{\mkYcAoM}$ to describe
the 
memory-less 
strategy profile of the two players,
controller and environment, based on the
non-empty 
control and disturbance ranges $\mpBykPY$ and $\mkYcAoM$.

For the weighted \emph{reach-avoid} winning condition
$F=(L,\Phi)$ with
stage 
and terminal costs~$L$ and~$\Phi$, we employ
\begin{align}
  \label{l:4}
  L({\vec{u}},{\vec{d}};{\vec{s}},k) =
  \begin{cases}
    0,\;\text{if}\;\rho\land\neg\alpha
    \\
    \top,\;\text{if $\alpha$ (shielded)}
    \\
    \lambda({\vec{u}},{\vec{d}};{\vec{x}},k)
    ,\;\text{otherwise}
  \end{cases}
  \Phi({\vec{s}}) =
  \begin{cases}
    0,\;\text{if}\;\rho\land\neg\alpha\\
    \top,\; \text{otherwise}
  \end{cases}
\end{align}
where $\rho,\alpha\colon\mGctqYz[\mathit{Var}]$ specify the goal
and unsafe regions to be reached and avoided, respectively.  We assume
that $\mcrHMQW\rho\subseteq\mathcal{X}$ and
$\mcrHMQW\alpha\subseteq{\mathbb{X}}$ with
$\mcrHMQW\rho\setminus\mcrHMQW\alpha\neq\varnothing$.
This way, $L$ penalises $\alpha$ and rewards $\rho$,
maximally.  Furthermore,
$\lambda({\vec{u}},{\vec{d}};{\vec{x}},k) = \mfBPXHb{\vec{x}}
P{\vec{x}} + \mfBPXHb{\vec{u}} Q{\vec{u}} + \mfBPXHb{\vec{d}} R{\vec{d}}$ is a
weight term with correspondingly dimensioned matrices $P$, $Q$,
and~$R$.\footnote{As we do not restrict ourselves to 
  LQR 
  solutions, $\lambda$ can be non-convex.} 
Let $\mathcal{G}$ be the class of integer difference games.

A \emph{winning strategy} for $G_{\vec{s}}$ is a
finite-horizon, 
discrete optimal 
controller ${\vec{u}}^*\colon\Pi_{\mpBykPY}$, which can be
obtained by solving a 
constrained, discrete dynamic optimisation
problem~\cite{Bertsekas2017-DynamicProgrammingOptimal-v1} for given a
horizon $N$:
\begin{align}
  \label{l:5}
  &&{\vec{u}}^*(\bar{\vec{x}};k)
  &= \arg\min_{{\vec{u}}\in\Pi_{\mpBykPY}}\max_{{\vec{d}}\in\Pi_{\mkYcAoM}}
    \mPmwBxY(\bar{\vec{x}},{\vec{u}},{\vec{d}},k)
  \\
  \text{and}\qquad
  &&\mijCQyI(\bar{\vec{x}};k)
  &= \min_{{\vec{u}}\in\Pi_{\mpBykPY}}\max_{{\vec{d}}\in\Pi_{\mkYcAoM}}
    \mPmwBxY(\bar{\vec{x}},{\vec{u}},{\vec{d}},k)
  \notag
  \\
  \text{subject to}\qquad
  &&\bar{\vec{x}}_{k+1}
  &= \bar{\vec{x}}_k +
    \hat\mNcTCKG({\vec{u}},{\vec{d}};\bar{\vec{x}}_k,k)%
  \notag
  \\
  \text{where}\qquad
  &&\mPmwBxY(\bar{\vec{x}},{\vec{u}},{\vec{d}},k)
  &\textstyle
    = \sum_{i=k}^{N-1}
    L({\vec{u}},{\vec{d}};\bar{\vec{x}}_i,i)
    + \Phi(\bar{\vec{x}}_N,N)\;.
  \notag
\end{align}
$\mPmwBxY(\bar{\vec{x}},{\vec{u}},{\vec{d}},k)$ is the
finite-horizon, 
discrete cost-to-go comprised of the running and terminal costs of a
play from state $\bar{\vec{x}}_k$ at time $k$, using the
strategy profile $({\vec{u}},{\vec{d}})$.
$\mijCQyI(\bar{\vec{x}};k)\in[0,\top]$ with
$\top<\infty$ is the finite-horizon value corresponding to the
optimal cost-to-go, using the optimal
profile~$({\vec{u}}^*,{\vec{d}}^*)$.
Then, we define 
$\mDpbDSN(\mathcal{X},k) =
\{{\vec{x}}\in\mathcal{X}\mid
\mijCQyI({\vec{x}},k)<\top\}$ to be
an 
approximation of $G_{\vec{s}}$'s 
\emph{winning region} at stage $k$ when playing no more than
$N-k$ 
steps.  In particular,
$\mDpbDSN(\mathcal{X},N) =
\{{\vec{x}}\in\mathcal{X}\mid \Phi({\vec{x}})<\top\}$.%
\footnote{We use $\mPmwBxY[N]$, $\mijCQyI[N]$, and
  $\mDpbDSN[N]$, if the horizon $N$ is not
  clear from the context.}
To solve \eqref{l:5}, we employ a forward-Euler DDP
(\Cref{l:6}) for the period
$I=1..N$.

\begin{algorithm}
  \caption{Discrete dynamic programming} 
  \label{l:6}
\includegraphics[width=\linewidth]{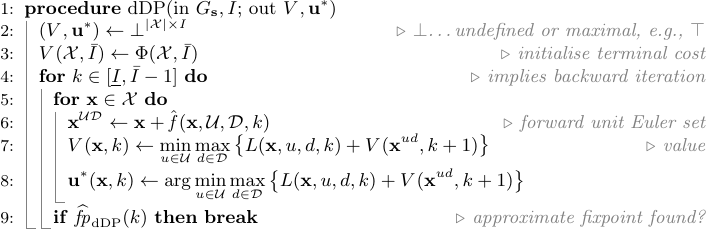}%
\end{algorithm}
 
The single-step successor
${\vec{x}}^{ud} 
= {\vec{x}} +
\hat\mNcTCKG({\vec{x}},u,d,k)%
$ 
is lifted in
Line~6, such that
${\vec{x}}^{ud}\in
{\vec{x}}^{\mpBykPY\mkYcAoM}\subseteq{\mathbb{X}}$.
Moreover, let
${\vec{x}}^{{\vec{u}}{\vec{d}}}\in\bar{\mathbb{X}}$ be the
$N$-step successor (trajectory) 
of $G$ emanating from ${\vec{x}}\in{\mathbb{X}}$ under
influence of $({\vec{u}},{\vec{d}})$, and
${\vec{x}}^{{\vec{u}}\mkYcAoM}\subseteq\bar{\mathbb{X}}$ be the family
of such trajectories 
under~$\mkYcAoM$.
For $\bar{\vec{x}}={\vec{x}}^{{\vec{u}}{\vec{d}}}\in\bar{\mathbb{X}}$
and 
$\mathcal{C}\subseteq
{\mathbb{X}}\times\mpBykPY\times\mkYcAoM\times\mathbb{N}_+$,
trajectory granularity~$\mBuHESu[]$ 
is bounded below by the smallest 
Lipschitz constant $K_*$ of $\hat\mNcTCKG$ on
$\mathcal{C}$ via the supremal norm
$\norm{\cdot}_\mathcal{C}$ 
by
\begin{align}
  \label{l:7}
  \sup_{k\in\mathbb{N}_+}
  \norm{\bar{\vec{x}}_{k+1}-\bar{\vec{x}}_k}
  \leq
  K_*
  = \norm{\hat\mNcTCKG}_\mathcal{C}
  := \sup_{c\in\mathcal{C}}\norm{\hat\mNcTCKG(c)}
  \leq \delta
  \;.
\end{align}
$\mBuHESu[]\supseteq[\pm\delta]^m$ may deliver bounds on measurement
uncertainties, for example, from feedback delay, that is, delayed observation of 
environmental effects on~${\vec{x}}$.

In Line~9, we allow a check of
$\widehat{\mathit{fp}}_{\mathrm{dDP}}$ for whether some approximation of a
fixpoint is achieved prior to reaching~$N$.  While meeting
the actual fixpoint would allow us to obtain a
stationary 
optimal strategy profile, 
the fixpoint approximation still guarantees bounded correct
(quasi-stationary and near-optimal) strategies under reduced memory
consumption and computational effort.
In \cite{Gleirscher2025-ParametricModelOptimal}, we approximate
that fixpoint with
\begin{align}
  \label{l:8}
  \widehat{\mathit{fp}}_{\mathrm{dDP}}(k)
  \equiv
  \majimMG{\mDpbDSN(\mathcal{X}_{\vec{s}},k)} 
  =
  \majimMG{\mDpbDSN(\mathcal{X}_{\vec{s}},k+1)}
  \lor 
  \widehat{\mathit{fp}}_{\vec{U}}(k)\,,
\end{align}
where the termination shortcut
$\widehat{\mathit{fp}}_{\vec{U}}(k) \equiv \exists k'\geq
k\colon
\top\not\in\mijCQyI({\vec{x}}\oplus\mBuHESu,k')$ specifies
that, latest at stage $k$, ${\vec{x}}$ is $\delta$-robustly
located inside $\mDpbDSN(\mathcal{X}_{\vec{s}},k)$.

\paragraph{Hybrid Game Playing.}

$G[\vec{U},T]$ is parameterised by a
\emph{hyper-policy} $\vec{U}$ and a \emph{task} $T$.  Our
model in \cite{Gleirscher2025-ParametricModelOptimal} assumes that
${\vec{x}}=({\vec{p}},{\vec{v}},i)$ contains a position, a velocity, and
a waypoint index.
\label{l:9}%
$\vec{U}_{\delta}\colon\mathcal{G}\times\mCBuiuW{\mathbb{N}_+}
\to\Pi_{\mpBykPY}$ increases flexibility, in particular, it
wraps \Cref{l:6} into a loop for spatio-temporal extension of
$(\mathcal{X},I)$.  $\delta$ is the robustness margin
according to \eqref{l:7}.
The role of $T = ({\mathbb{X}},{\bar{r}},\mORCysO)$, with route
${\bar{r}}\subset{\mathbb{X}}|_{\vec{p}}$ and fixed obstacle cloud
$\mORCysO\subset{\mathbb{X}}|_{\vec{p}}$, is to refine~$F$.

Importantly, $T$ needs to satisfy a healthiness condition called
\emph{$\delta$-perforation}:
\begin{align}
  \label{l:10}
  \exists \mQtakuY\in\tilde{\mathbb{X}}\colon
  {\bar{r}}\subseteq
  \underbrace{\mQtakuY|_{{\vec{p}}}\oplus\mBuHESu}_{\text{cont.\ $\delta$-tube}}
  \land
  \mQtakuY|_{{\vec{p}}}\oplus\mBuHESu 
  \cap\mORCysO=\varnothing\,.
\end{align}
Given \eqref{l:10}, \eqref{l:4}
implies $\mijCQyI(\vec 0,\cdot)=0$.
We use \Cref{l:11} as a simulator that terminates when
an overall goal $\Omega$ is reached or failures occur
(Line~10).

\begin{algorithm}
  \caption{%
    Hybrid game player (with restricted task updates)}
  \label{l:11}
\includegraphics[width=\linewidth]{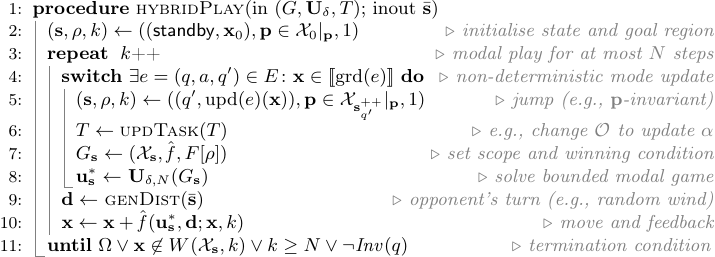}%
\end{algorithm}

\begin{definition}[Online Synthesis Problem]
  \label{l:12}
  Given a hybrid game automaton~$G$ and a task~$T$, continuously
  find tactical controllers ${\vec{u}}$ steering the system 
  along route~${\bar{r}}$ while safely circumventing unsafe
  regions. 
  For correctness, we need ${\vec{u}}$ to be
  \begin{inparaenum}[(i)]
  \item $\delta$-robust~(i.e., safe under bounded disturbance
    $\mkYcAoM$),
  \item near-optimal~(i.e., follow the minimum-cost path inside
    padded segment scopes, approximating all waypoints), and
  \item reaching the endpoint of ${\bar{r}}$.
  \end{inparaenum}
\end{definition}

\section{Side-Conditions for Game Solvability}
\label{l:13}

From our setting in \cite{Gleirscher2025-ParametricModelOptimal},
it may seem obvious that $\mijCQyI({\vec{x}},1)<\top$ is
a sufficient condition for a bounded safe and live controller
${\vec{u}}^*$ to exist.  However, it turns out to non-obvious how to
establish this condition in general and, in particular, for our
setting.  Hence, we first identify several side-conditions for game
solvability, investigate their necessity and sufficiency, and then
extract a theorem and proof from these findings.

\paragraph{Local Controllability.}

To ensure the existence of control inputs to 
reject disturbance,
a dynamics~$\hat\mNcTCKG$ 
with a balanced equivalent-order player superposition
requires the player alphabets $\mpBykPY$ and $\mkYcAoM$ to satisfy
\begin{align}
  \label{l:14}
  \forall{\vec{d}}\in\mkYcAoM\, 
  \exists{\vec{u}}\in\mpBykPY
  \colon
  \langle
  {\vec{d}}
  + \frac{\langle{\vec{d}},{\vec{u}}\rangle}{\norm{{\vec{d}}}} 
  ,
  {\vec{d}}
  \rangle\leq 0
  \,.
\end{align}
Adding ${\vec{d}}$ to the ${\vec{d}}$-component
$\frac{\langle{\vec{d}},{\vec{u}}\rangle}{\norm{{\vec{d}}}}$ of ${\vec{u}}$
in the scalar product with ${\vec{d}}$ stresses the relevance of
${\vec{u}}$'s magnitude.  Refining \eqref{l:14} from
compensation ($\leq$) 
to overriding ($<$), 
${\vec{v}}$ and, hence, ${\vec{p}}$ become locally
controllable.\footnote{Comparing $\mpBykPY,\mkYcAoM$ by diameter
  only works if $\mpBykPY,\mkYcAoM$ are centered, spherical, and
  dense 
  enough.}

\Cref{l:6} may deviate from \eqref{l:14}.  However,
if, from ${\vec{x}}$, a worst-case ${\vec{d}}$ would lead to
$\mcrHMQW\alpha$ then, by \eqref{l:14}, \Cref{l:6} can
prevent that from happening by 
prematurely rejecting ${\vec{d}}$, no matter whether or not the
environment actually chooses such a ${\vec{d}}$ when in ${\vec{x}}$.

Local controllability is neither necessary (simple tasks are still
solvable, e.g., with a short enough path to $\mcrHMQW\rho$) nor
sufficient (no guaranteed progress, e.g., if the obstacle cloud is not
enough perforated) for game solvability.

\paragraph{Lower Bound for Robustness against Disturbance.}

It is important to know how robust the system is against disturbance
under delayed observation and (minimal) non-zero progress. Equivalently, for
evaluating $\delta$-perforation \eqref{l:10}, we have
to know how much space the controller needs at least to react under
delayed 
feedback.
More generally, because of possible latencies in sensing, computation,
and actuation, disturbance rejection while maintaining correctness
(e.g., safety, no stalling) might be delayed by one or more time steps.

While \eqref{l:7} enforces cooperation
towards a maximum change of~${\vec{x}}$, players in
$G_{\vec{s}}$ act 
adversarially.  The $\delta$ in \eqref{l:10} can thus
be smaller than in \eqref{l:7}:
\begin{align}
  \label{l:15} 
  0&<
     \max_{d\in\mkYcAoM}
     \inf_{\substack{
     {\vec{v}}+\dot{\vec{v}}\neq\vec 0
     \\ 
     {\vec{x}}\in{\mathbb{X}},\, k\in\mathbb{N}_+}}
     \norm{\hat\mNcTCKG({\vec{x}},
     \vec 0, 
     d,k)%
     |_{\vec{p}}}
     <\delta\;,
  \\
  \label{l:16}
  0&<
     \max_{d\in\mkYcAoM}
     \sup_{\substack{
     {\vec{x}}\models\rho
     \\ 
     {\vec{x}}\in{\mathbb{X}},\, k\in\mathbb{N}_+}}
     \norm{\hat\mNcTCKG({\vec{x}},
     \vec 0, 
     d,k)%
     |_{\vec{p}}}
     <\delta_\rho\;.
\end{align}
Being in the center of a narrow cavity outside
$\mcrHMQW\alpha$ (e.g., in a tube), the controller does not know about
the environment's current and next choices but may want to provide
choices to make progress. 

Because of value minimisation in \Cref{l:6}, the worst case of
$u$ and
$d$ being coincident improves to the controller idling
($u=\vec
0$, e.g., because of a temporary fault) and the environment gaining 
instants of exclusive control.
In asymmetric cases (e.g., close to a wall), \Cref{l:6} will choose
to deviate from $u=\vec
0$ in either direction and perhaps reduce
$\delta$ locally even further.

The infimum in \eqref{l:15} strives for a
minimum non-zero speed $v_{\min}$ 
under disturbance rejection delayed by one stage.  
To make progress safely for at least one stage, $\delta$ then
has to be greater than $v_{\min}$.  However,
\eqref{l:15} is neither necessary (e.g.,
the system may slip through narrower cavities) nor sufficient (e.g., it
does not address global progress or safety) for game solvability.

The lower bound in \eqref{l:16}
indicates a safe region under a potential lack of control beyond
$N$ and before or after a
jump 
(e.g., at highest speed).

\paragraph{$\delta$-Robust Tracking.}

$\mpBykPY$ and $\mkYcAoM$ according to \eqref{l:14}
and $\delta$ according to
\eqref{l:15} together imply that, among
the ${\mathbb{X}}$-continuous trajectories satisfying
\eqref{l:10}, at least one $\mQtakuY\in\tilde{\mathbb{X}}$ is
\emph{robustly trackable}, that is,
\begin{align}
  \label{l:17}
  \max_{d\in\mkYcAoM}%
  \inf_{\substack{{\vec{x}}\in\mQtakuY
  \\ 
  k\in 1..\majimMG{\mQtakuY}}}
  \norm{
  \hat\mNcTCKG({\vec{x}},
     \vec 0, 
  d,k)%
  |_{\vec{p}}
  }%
  <\delta\;.
\end{align}
We observe that the nearest point on $\mQtakuY$ is within a
$\delta$-ball around
$\mQtakuY_k^{ud}$.  Equivalently, a
(controlled) move from $\mQtakuY_k$ does not deviate from
$\mQtakuY$ by more than~$\delta$.

In passages of width above $\delta$, there will be several
trajectories $\mQtakuY\in\tilde{\mathbb{X}}$ for \Cref{l:6} to construct
a controller ${\vec{u}}^*$ that can track a matching (discontinuous)
$\mBuHESu[]$-trajectory $\mQtakuY'\subseteq\mQtakuY$ (i.e., distance below
$\delta$ by definition of ${\mathbb{X}}$-continuity).  In passages
of width $\delta$, \Cref{l:6} will construct a controller
${\vec{u}}^*$ that satisfies \eqref{l:15}
and, thus,~\eqref{l:17} and, by
\eqref{l:14}, will be able to again approach the
near-optimal trajectory $\mQtakuY$ from \eqref{l:10} latest
in the subsequent stage.

If the smallest perforation width $\delta_{\mORCysO}$ found in
$\mORCysO$ is a multiple of $\delta$ according to
\eqref{l:15}, we obtain an upper
\emph{delay bound}~$n$ defined as
\begin{align}
  \label{l:18}
  n=\max\{n'\in\mathbb{N}\mid n'\delta\leq\delta_{\mORCysO}\}\;.
\end{align}
Disturbance rejection and robust tracking then extend to
$n$ 
observational delays and actuator perturbations.
We call a system \emph{delay-bounded}, if it operates with a maximum
of $n$ delays and perturbations.  Delay-boundedness, emerging from
local controllability \eqref{l:14} and robust
trackability
\eqref{l:17}, 
forms a necessary condition for the existence of a corresponding
controller. 

\paragraph{Heuristics for an Initial Horizon.}

A lower bound for $N$---under-approximating
$K_*$ and rarely applicable---would be
$\inf_{{\vec{p}}'\in\mcrHMQW\rho|_{\vec{p}}}\norm{{\vec{p}} -
  {\vec{p}}'}/v_{\max}\leq N$. 
If $\norm{{\vec{v}}}\geq v_{\min}>0$ is enabled by $\delta$
satisfying \eqref{l:15}, an upper, albeit
impractical, bound would be $N\leq\majimMG{{\mathbb{X}}|_{\vec{p}}}/v_{\min}$.
Being at ${\vec{p}}$, initialising $\vec{U}$ 
with
\begin{align}
  \label{l:19}
  N\approx
  \inf_{{\vec{p}}'\in\mcrHMQW\rho|_{\vec{p}}}
  \left\lfloor
  \frac{2\sigma\norm{{\vec{p}} - {\vec{p}}'}}{v_{\min}+v_{\max}}
  \right\rfloor
\end{align}
can be a reasonable heuristic, with sparsity~$\sigma=1$ for a sparse
or highly perforated $\mORCysO$ and $\sigma\gg 1$ for a densely populated
$\mORCysO$.  While the lower bound is a necessary condition (i.e., the
goal region must be closer than what the system can reach in
$N$ stages at highest speed), none of the bounds are
sufficient (e.g., a non-perforated $\mORCysO$ with $\sigma=\infty$ will
obstruct reaching the goal).

\paragraph{Upper Bound of Stage Cost.}

$L$ can be both zero outside
$\mcrHMQW\rho\setminus\mcrHMQW\alpha$ and non-zero inside that area.
However, $L$ must be bounded by
$N\sup_{c\in\mathcal{C}}L(c)<\top$, which 
is a necessary condition for the following discussion.

\section{Solvability and Controlled Invariant Set} 
\label{l:20}

For the discussion below, we fix $G_{\vec{s}}$ for some
${\vec{s}}\in{\mathcal{S}}$ and leave ${\vec{s}}$ implicit.  Let
${\vec{x}}\in{\mathbb{X}}$,
$(u, d)\in\mpBykPY\times\mkYcAoM$,
$k\in 2..%
N$.
For the inclined reader, it is easy to see that
Line~8
of \Cref{l:6} implement a 
discrete HJI equation~\cite{Tomlin2000-gametheoreticapproach}
\begin{align}
  \label{l:21}
  \mijCQyI({\vec{x}},k-1) - \mijCQyI({\vec{x}},k)
  =
  \min_{u\in\mpBykPY}
  \max_{d\in\mkYcAoM}
  \mathcal{H}({\vec{x}},u,d,k)
\end{align}
with a corresponding 
Hamiltonian (i.e., directional difference of 
$\mijCQyI$ along 
$\hat\mNcTCKG$) 
\begin{align*}
  \mathcal{H}({\vec{x}},u,d,k)
    &=
    L({\vec{x}},u,d,k)
    + 
    \mijCQyI({\vec{x}}^{ud},k)
    -
    \mijCQyI({\vec{x}},k)
    \;.
\end{align*}
HJI equations connect value evolution with the minimax
combination of stage cost and 
value descent. 
Below, we first establish a fundamental lemma and proceed with an
inductive invariant.

\begin{lemma}[Monotonicity] 
  \label{l:22}
  $\mijCQyI({\vec{x}},k-1)<\top
  \implies
  \mijCQyI({\vec{x}},k-1)
  \leq\mijCQyI({\vec{x}},k)$
\end{lemma}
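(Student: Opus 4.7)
The plan is a backward induction on $k \in 2..N$, driven by the Bellman recursion
\[
  V({\vec{x}}, k-1) \;=\; \min_{u \in \mpBykPY}\max_{d \in \mkYcAoM} \bigl[\, L({\vec{x}}, u, d) + V({\vec{x}}^{ud}, k) \,\bigr],
\]
which is obtained from the discrete HJI equation \eqref{l:21} by pulling $V({\vec{x}}, k)$ out of the minimax. The animating intuition is that a reach-avoid game with non-negative stage cost that vanishes on $\mcrHMQW\rho \setminus \mcrHMQW\alpha$ (by the first branch of \eqref{l:4}) cannot penalise the controller for having an extra stage: once inside the goal the additional stage is free, and outside it the quadratic shape of $\lambda$ tends to reward spreading control effort over more steps.

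For the base case $k = N$, the terminal value $V({\vec{x}}, N) = \Phi({\vec{x}})$ takes only the values $0$ or $\top$. The case $V({\vec{x}}, N) = \top$ is immediate; the case $V({\vec{x}}, N) = 0$ forces ${\vec{x}} \in \mcrHMQW\rho \setminus \mcrHMQW\alpha$, hence $L({\vec{x}}, u, d, N-1) = 0$, and the premise $V({\vec{x}}, N-1) < \top$ requires some $u^\star$ with $\Phi({\vec{x}}^{u^\star d}) = 0$ for every $d$, giving $V({\vec{x}}, N-1) = 0 = V({\vec{x}}, N)$.

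For the inductive step at $k$, assuming the claim at $k+1$, the sub-case $V({\vec{x}}, k) = \top$ is trivial. Otherwise, let $u^\circ$ attain the outer minimum in the recursion for $V({\vec{x}}, k-1)$; since $V({\vec{x}}, k-1) < \top$, this forces $V({\vec{x}}^{u^\circ d}, k) < \top$ for every $d$, so the induction hypothesis applies at each successor and yields $V({\vec{x}}^{u^\circ d}, k) \leq V({\vec{x}}^{u^\circ d}, k+1)$. Substituting this bound term-by-term inside the outer $\max_d$ and using $u^\circ$ as a feasible candidate in the recursion for $V({\vec{x}}, k)$ then closes the comparison.

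The main obstacle is that the final comparison is clean only when the inherited control $u^\circ$ remains coherent between the two adjacent recursions; in pathological cases a shorter-horizon optimum could drive the state into a region lacking a stay-action, decoupling the two bounds. I would dispose of this by invoking the side-conditions of \Cref{l:13}: local controllability \eqref{l:14}, the robustness margin \eqref{l:15}, and $\delta$-perforation \eqref{l:10} jointly guarantee a zero-cost stay-step inside $\mcrHMQW\rho \setminus \mcrHMQW\alpha$ under adversarial disturbance. Consequently, any $(N-k)$-stage optimum extends coherently to an $(N-k+1)$-stage strategy that witnesses $V({\vec{x}}, k-1) \leq V({\vec{x}}, k)$ and closes the induction.
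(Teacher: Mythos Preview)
Your inductive step does not close. With $u^\circ$ optimal for $\mijCQyI({\vec{x}},k-1)$ you obtain
\[
\mijCQyI({\vec{x}},k-1)=\max_{d}\bigl[L+\mijCQyI({\vec{x}}^{u^\circ d},k)\bigr]\;\leq\;\max_{d}\bigl[L+\mijCQyI({\vec{x}}^{u^\circ d},k+1)\bigr]
\]
via the hypothesis, and feasibility of $u^\circ$ in the recursion for $\mijCQyI({\vec{x}},k)$ gives $\mijCQyI({\vec{x}},k)\leq\max_{d}\bigl[L+\mijCQyI({\vec{x}}^{u^\circ d},k+1)\bigr]$. But both inequalities bound their left-hand sides \emph{above} by the same quantity; they do not yield $\mijCQyI({\vec{x}},k-1)\leq \mijCQyI({\vec{x}},k)$. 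The natural repair---take instead the optimiser $u^{*}$ of $\mijCQyI({\vec{x}},k)$ so that the feasibility bound becomes an equality---fails for a different reason: the premise $\mijCQyI({\vec{x}}^{u^{*}d},k)<\top$ needed to invoke the hypothesis at the successors is not supplied by $\mijCQyI({\vec{x}},k-1)<\top$ at ${\vec{x}}$ (only $\mijCQyI({\vec{x}}^{u^{*}d},k+1)<\top$ follows from $\mijCQyI({\vec{x}},k)<\top$). Your conditional hypothesis is too weak to push the induction through either way.

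The paper does not argue by induction on \eqref{l:21} at all. It gives a two-line argument from Bellman's principle of optimality and the exhaustiveness of the search in \Cref{l:6}: $\mijCQyI({\vec{x}},k-1)$ optimises over suffixes of length $N-(k-1)$, and the exhaustive minimax cannot overlook a shorter suffix of length $N-k$ that does strictly better, so $\mijCQyI({\vec{x}},k-1)>\mijCQyI({\vec{x}},k)$ is ruled out. Your final paragraph---extend an $(N-k)$-stage optimum by a zero-cost stay-step in $\mcrHMQW\rho\setminus\mcrHMQW\alpha$---is essentially this trajectory-embedding idea and is the part of your write-up that actually carries the weight; the inductive scaffolding above it does not. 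Note also that the paper's proof of this lemma invokes none of the side-conditions of \Cref{l:13}; it treats monotonicity as a direct consequence of optimality and exhaustive search.
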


\begin{proof}[\Cref{l:22}]
  By Bellman's principle of optimality,
  $\mijCQyI({\vec{x}},k-1)$ is the value of the optimal
  suffix of 
  length $N-(k-1)$.  Exhaustive 
  search (Lines~4 to~6 in
  \Cref{l:6}) rules out shorter suffixes of length
  $N-k$, yet unseen but better with respect to $\mPmwBxY$,
  such that
  $\mijCQyI({\vec{x}},k-1)>\mijCQyI({\vec{x}},k)$.
\end{proof}

\begin{lemma}[Reach-Avoid $k$-Winning Region]
  \label{l:23} 
  If $\mijCQyI$ (obtained via
  $\vec{U}$) 
  satisfies \eqref{l:21} then
  $\mDpbDSN(\mathcal{X},k)$ contains exactly the states from
  where the system can reach $\mcrHMQW\rho\setminus\mcrHMQW\alpha$ in at
  most $N-k$ stages while remaining outside
  $\mcrHMQW\alpha$.
\end{lemma}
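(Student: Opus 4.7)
The plan is to proceed by backward induction on $k$, from $k=N$ down to $k=1$, reading the HJI identity \eqref{l:21} as the one-step recursion
\[
\mijCQyI({\vec{x}},k-1) \;=\; \min_{u\in\mpBykPY}\max_{d\in\mkYcAoM}\bigl[L({\vec{x}},u,d,k) + \mijCQyI({\vec{x}}^{ud},k)\bigr].
\]

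For the base case $k=N$, I would unfold the definition $\mDpbDSN(\mathcal{X},N) = \{{\vec{x}}\in\mathcal{X}\mid \Phi({\vec{x}})<\top\}$ and invoke \eqref{l:4}, obtaining $\mDpbDSN(\mathcal{X},N) = \mcrHMQW{\rho\land\neg\alpha}_{\mathcal{X}}$; this is trivially the set of states reaching $\mcrHMQW{\rho}\setminus\mcrHMQW{\alpha}$ in $0=N-N$ stages while avoiding $\mcrHMQW{\alpha}$.

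Assuming the characterisation holds at stage $k$, I would prove both inclusions for $\mDpbDSN(\mathcal{X},k-1)$ from the recursion above. For the ``$\subseteq$'' direction, if $\mijCQyI({\vec{x}},k-1)<\top$ then some $u^{\star}\in\mpBykPY$ attains the outer minimum and, against every $d\in\mkYcAoM$, both summands stay finite; finiteness of $L$ forces ${\vec{x}}\not\models\alpha$ by \eqref{l:4}, while finiteness of $\mijCQyI({\vec{x}}^{u^{\star}d},k)$ places each successor in $\mDpbDSN(\mathcal{X},k)$, so the inductive hypothesis supplies an $(N-k)$-stage winning continuation from every such successor, and prepending $u^{\star}$ yields the required $(N-k+1)$-stage reach-avoid strategy. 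For the ``$\supseteq$'' direction, any witnessing strategy $u^{\star}$ must already have ${\vec{x}}\not\models\alpha$ (otherwise $\mcrHMQW{\alpha}$ is entered immediately), so $L({\vec{x}},u^{\star},d,k)<\top$ for every $d$; the tail of $u^{\star}$ then certifies ${\vec{x}}^{u^{\star}d}\in\mDpbDSN(\mathcal{X},k)$ by induction, and the recursion bounds $\mijCQyI({\vec{x}},k-1)\leq\max_{d}[L({\vec{x}},u^{\star},d,k)+\mijCQyI({\vec{x}}^{u^{\star}d},k)]<\top$.

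The main obstacle is bookkeeping: I have to argue that the shielded branch of $L$ in \eqref{l:4} really enforces $\mcrHMQW{\alpha}$-avoidance at every intermediate stage, not only at the start and end of the horizon. Both inclusions hinge on reading $L=\top$ as an absolute penalty no finite successor value can outweigh; combined with \Cref{l:22}, which prevents a $\top$-valued state at stage $k$ from re-emerging as a finite value at stage $k-1$, this is what aligns the ``$<\top$'' threshold with the informal reach-avoid property. Once that alignment is spelled out precisely, the induction closes routinely.
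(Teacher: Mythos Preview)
Your proposal is correct and follows essentially the same route as the paper---reverse induction on $k$, the same base case via $\Phi$, and two inclusions per step read off the one-step recursion---the only presentational differences being that the paper keeps \eqref{l:21} in difference form and argues the ``$\supseteq$'' direction by contradiction, leaning on \Cref{l:22}, whereas you argue it directly. One small slip worth fixing: your closing gloss on \Cref{l:22} (that it ``prevents a $\top$-valued state at stage $k$ from re-emerging as a finite value at stage $k-1$'') is not what that lemma asserts and is in fact false---the winning region genuinely grows as $k$ decreases---but your two inclusions never actually invoke that reading, so the argument stands.
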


\begin{proof}[Sketch for \Cref{l:23}]
  The proof is by reverse induction on $k$ in
  \eqref{l:21} and provided in full detail in
  \Cref{l:29}.  This proof provides that
  \Cref{l:23} specifies an inductive
  invariant.
\end{proof}

Under well-formedness assumptions detailed below,
\Cref{l:23} is a sufficient condition for
$k$-bounded safe and live controllers to exist.

\paragraph{Configurations and Pre-/Post-Conditions.}

Defined in \Cref{l:2}, the hybrid game
automaton~$G$, the hyper-policy $\vec{U}_{\delta}$
with a robustness margin~$\delta$, and the task
specification~$T=({\mathbb{X}},{\bar{r}},\mORCysO)$ together form
a \emph{configuration}
$\mathsf{C}=(G, \vec{U}_{\delta},
T)$.
We call $\mathsf{C}$ \emph{well-formed}, written
$\mathit{wf}(\mathsf{C})$, if
\begin{enumerate}[(i)]
\item $\mORCysO$ is $\delta$-perforated \eqref{l:10}
  with $\delta$ according to
  \eqref{l:15},
\item locally controllable \eqref{l:14}, and
\item delay-bounded \eqref{l:18}.
\end{enumerate}
For $G_{\vec{s}}$ and a specific initial horizon $N$
according to \eqref{l:19}, we obtain the \emph{modal}
configuration
$\mathsf{C}_{\vec{s}}=(G_{\vec{s}},
\vec{U}_{\delta,N}, T)$ with discrete
dynamics~$\hat\mNcTCKG$. 
We have $\mathit{wf}(\mathsf{C}_{\vec{s}})$ if $\mathit{wf}(\mathsf{C})$
and $G_{\vec{s}}$ is initialised with
${{\vec{x}}_0}\not\in\mcrHMQW{\alpha}$.

We call ${\vec{u}}^*_{\vec{s}}$ \emph{correct}, written
$\mathit{cr}({\vec{u}}^*_{\vec{s}})$, if it is
\begin{enumerate}[(i)]
\item live (i.e., it reaches the next waypoint or finishes its task),
\item safe (i.e., it avoids the unsafe region, e.g., it avoids collisions),
\item $\delta$-robust (it rejects disturbances
  below $\delta$ with a single-step delay without getting into an
  unsafe state), and
\item near-optimal (it provides the steepest value-decreasing control input),
\end{enumerate}
when applied to ${\vec{x}}\in\mDpbDSN(\mathcal{X}_{\vec{s}},1)$
for a maximum of $N$ stages.

The key aspects of the proposed online synthesis are summarised in the
following solvability theorem.

\begin{theorem}[Modal Game Solvability]
  \label{l:24}
  Given a modal configuration $\mathsf{C}_{\vec{s}}$, if
  $\mathsf{C}_{\vec{s}}$ is \emph{well-formed} then
  $\vec{U}$ 
  constructs 
  a \emph{correct} controller ${\vec{u}}^*_{\vec{s}}$,
  \begin{align*}
    \forall{\vec{s}}\in{\mathcal{S}}\colon
    \{\mathit{wf}(\mathsf{C}_{\vec{s}})\}\;
    {\vec{u}}^*_{\vec{s}} :=
    \vec{U}_{\delta,N}(G_{\vec{s}})\;
    \{\mathit{cr}({\vec{u}}^*_{\vec{s}})\}\;.
  \end{align*}
\end{theorem}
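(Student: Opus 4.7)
The plan is to prove the Hoare triple by composing the two preceding lemmas with the three well-formedness clauses. First, I would show that $\mathit{wf}(\mathsf{C}_{\vec{s}})$ forces ${\vec{x}}_0 \in \mDpbDSN(\mathcal{X}_{\vec{s}}, 1)$, i.e., $\mijCQyI({\vec{x}}_0, 1) < \top$: $\delta$-perforation \eqref{l:10} furnishes a witness trajectory $\mQtakuY \in \tilde{\mathbb{X}}$ with ${\bar{r}} \subseteq \mQtakuY|_{\vec{p}} \oplus \mBuHESu$ disjoint from $\mORCysO$; local controllability \eqref{l:14} yields, for every ${\vec{d}} \in \mkYcAoM$, a compensating ${\vec{u}} \in \mpBykPY$; and delay-boundedness \eqref{l:18} guarantees a reactive window of up to $n$ stages before any perturbed state could slip into $\mcrHMQW\alpha$. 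Threading these together supplies at least one strategy profile whose cost-to-go in \eqref{l:5} is finite, whence $\mijCQyI({\vec{x}}_0, 1) < \top$ by its minimax definition.

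Next, I would invoke \Cref{l:23}. Since Line~8 of \Cref{l:6} realises \eqref{l:21} verbatim, the lemma's hypothesis holds, so $\mDpbDSN(\mathcal{X}_{\vec{s}}, k)$ characterises exactly the states from which $\mcrHMQW\rho \setminus \mcrHMQW\alpha$ is reachable in at most $N - k$ stages while avoiding $\mcrHMQW\alpha$. Combined with \Cref{l:22}, once ${\vec{x}}_0$ lies in the winning region at stage~1, every subsequent ${\vec{x}}_k$ produced by ${\vec{u}}^*_{\vec{s}}$ still lies in $\mDpbDSN(\mathcal{X}_{\vec{s}}, k)$, because the optimal continuation selected in Lines~7--8 of \Cref{l:6} never increases value along a feasible play. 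From this joint characterisation the four clauses of $\mathit{cr}({\vec{u}}^*_{\vec{s}})$ fall out: \emph{safety} is forced by the shielded branch of $L$ in \eqref{l:4}, which assigns $\top$ to any entry into $\mcrHMQW\alpha$, so an optimal ${\vec{u}}^*$ cannot take one; \emph{liveness} is the reachability clause of \Cref{l:23} applied inductively; \emph{near-optimality} is the minimax-optimality of ${\vec{u}}^*$ from \eqref{l:5}; and \emph{$\delta$-robustness} pairs \eqref{l:15} with the tracking bound \eqref{l:17}, certified by the shortcut $\widehat{\mathit{fp}}_{\vec{U}}$ of \eqref{l:8}.

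The main obstacle I anticipate is the $\delta$-robustness clause, because it must reconcile two different discretisations: the cooperative Lipschitz bound $K_* \leq \delta$ of \eqref{l:7} and the adversarial bound in \eqref{l:15}. The argument would proceed by showing that $\widehat{\mathit{fp}}_{\vec{U}}(k)$---which asserts $\top \notin \mijCQyI({\vec{x}} \oplus \mBuHESu, k')$ for some $k' \geq k$---certifies that the entire $\delta$-ball around ${\vec{x}}$ lies robustly inside the next winning region, so that a disturbance observed with a one-stage delay still leaves the controller room to recover via \eqref{l:14}. The remaining step is administrative: universally quantify the per-stage argument over ${\vec{s}} \in {\mathcal{S}}$ with ${\vec{x}}_0 \notin \mcrHMQW\alpha$, which is precisely the initialisation clause appended to $\mathit{wf}(\mathsf{C}_{\vec{s}})$.
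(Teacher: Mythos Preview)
Your proposal is correct and follows essentially the same approach as the paper: both decompose $\mathit{cr}({\vec{u}}^*_{\vec{s}})$ into the four clauses via \Cref{l:22,l:23} and the three well-formedness ingredients, and both close by placing ${\vec{x}}_0$ inside $\mDpbDSN(\mathcal{X}_{\vec{s}},1)$. The only structural difference is that the paper treats termination of $\vec{U}$ as a separate existence paragraph (invoking \Cref{l:30} and the fixpoint shortcut $\widehat{\mathit{fp}}_{\vec{U}}$), whereas you fold this into your opening argument that $\mijCQyI({\vec{x}}_0,1)<\top$ by directly exhibiting a finite-cost tracking strategy from the $\delta$-perforation witness.
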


\begin{proof}[Sketch for \Cref{l:24}]
  The post-conditions (i) to (iv) are established separately based on
  \eqref{l:10}, the side conditions in
  \Cref{l:13}, features of \Cref{l:6},
  and \Cref{l:23}.  The complete proof is
  provided in \Cref{l:29}.
\end{proof}

\begin{lemma}[Preservation]
  \label{l:25}
  $\vec{U}_{\delta,N}$
  preserves game solvability.
\end{lemma}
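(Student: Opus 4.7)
The plan is to recast preservation as a controlled forward-invariance property of the $k$-winning region: I want to show that whenever ${\vec{x}}\in\mDpbDSN(\mathcal{X}_{\vec{s}},k-1)$, the controller $\vec{u}^{\star}_{\vec{s}}$ synthesised by $\vec{U}_{\delta,N}$ drives ${\vec{x}}$ under any admissible disturbance into $\mDpbDSN(\mathcal{X}_{\vec{s}},k)$, and that the resulting successor still satisfies $\mathit{wf}(\mathsf{C}_{\vec{s}})$, so that \Cref{l:24} re-applies and solvability is inherited along the play.

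First, I would specialise the HJI equation \eqref{l:21} at stage $k-1$ to a state ${\vec{x}}$ with $\mijCQyI({\vec{x}},k-1)<\top$. The decomposition of $\mijCQyI({\vec{x}},k-1)$ into stage cost $L$ and one-step-ahead value $\mijCQyI({\vec{x}}^{ud},k)$, together with the upper bound on $L$ from \Cref{l:13}, forces $\mijCQyI({\vec{x}}^{u^{\star}d},k)<\top$ for the $\arg\min$ controller $u^{\star}$ and the worst-case $d\in\mkYcAoM$. Hence ${\vec{x}}^{u^{\star}d}\in\mDpbDSN(\mathcal{X}_{\vec{s}},k)$, and \Cref{l:22} rules out the pathological case in which the value spuriously exceeds $\top$ at a later stage. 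An induction on the stage index then lifts this to the full play.

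Second, I would combine this closed-loop invariance with the side-conditions of \Cref{l:13} to conclude that the successor configuration remains well-formed. Local controllability \eqref{l:14} ensures that the required $u^{\star}$ is realisable in $\mpBykPY$; robust trackability \eqref{l:17} bounds the deviation caused by delayed disturbance by $\delta$; and $\delta$-perforation \eqref{l:10} guarantees that the resulting $\delta$-tube around the chosen trajectory still avoids $\mcrHMQW\alpha$. Consequently $\mathit{wf}(\mathsf{C}_{\vec{s}})$ holds at the successor state, and the hypothesis of \Cref{l:24} is restored.

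The main obstacle I anticipate is the interaction of the argument with the spatio-temporal scope adaptation wrapped into $\vec{U}_{\delta,N}$. When ${\vec{x}}^{u^{\star}d}$ falls outside the current scope $\mathcal{X}$, the hyper-policy enlarges it to some $\mathcal{X}'\supseteq\mathcal{X}$ and may extend the horizon; I would argue that (i) the obstacle cloud $\mORCysO$ and route ${\bar{r}}$ are properties of the fixed task $T$ and hence unaffected by enlargement, so $\delta$-perforation and the delay bound $n$ from \eqref{l:18} transfer verbatim to $\mathcal{X}'$, and (ii) the fixpoint shortcut \eqref{l:8} cannot prematurely prune states that are still reachable, since its premise $\widehat{\mathit{fp}}_{\vec{U}}(k)$ already enforces $\delta$-robust containment. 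The subtlety is making these two points precise without re-running the inductive construction of \Cref{l:23} on the enlarged scope; I would handle it by observing that $\mDpbDSN$ is monotone in $\mathcal{X}$, so $\mDpbDSN(\mathcal{X},k)\subseteq\mDpbDSN(\mathcal{X}',k)$, and solvability can only improve under the enlargement.
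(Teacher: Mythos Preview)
Your proposal misreads what the lemma asserts. In the paper, ``$\vec{U}_{\delta,N}$ preserves game solvability'' refers specifically to preservation under the hyper-policy's spatio-temporal \emph{scope extension}: if the game is solvable for $(\mathcal{X}_{\vec{s}},I)$, it remains solvable after $\vec{U}$ enlarges the scope to some $\mathcal{X}'\supset\mathcal{X}_{\vec{s}}$ and the interval to $I'=\underline I..\bar I+\delta_I$. Your first two paragraphs instead argue closed-loop forward invariance of the $k$-winning region along a play under the synthesised controller; that is essentially the content of \Cref{l:23} and is already consumed inside the proof of \Cref{l:24}. It is not what \Cref{l:25} adds.

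The material you actually need is in your third paragraph, which you present as an ``obstacle'' rather than the main point. The paper's proof is short: it observes that ${\vec{x}}$ and $\rho$ stay fixed during extension, invokes \Cref{l:22} to conclude that a correct solution within $N$ stages yields one within $N+1$ stages, and then argues, under local controllability \eqref{l:14}, that enlarging $(\mathcal{X}_{\vec{s}},I)$ to $(\mathcal{X}',I')$ can only relax $\mDpbDSN[\bar I'](\mathcal{X}',1)$ for liveness (more space and time to reach $\mcrHMQW\rho\setminus\mcrHMQW\alpha$) while safety is constrained only as far as the delay bound $n$ of \eqref{l:18} requires. Your monotonicity observation $\mDpbDSN(\mathcal{X},k)\subseteq\mDpbDSN(\mathcal{X}',k)$ is exactly the liveness half of this argument; promote it from obstacle to main claim, pair it with the horizon-extension step via \Cref{l:22}, and drop the forward-invariance scaffolding.
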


\begin{proof}[\Cref{l:25}]
  Observe that ${\vec{x}}$ and $\rho$ are fixed during scope
  extension.  By \Cref{l:22}, if there exists a correct
  solution within $N$ stages, a corresponding solution exists
  within $N+1$ stages.  That solution will just not allow the
  controller to wait for too long in order to remain on the safe side.

  Under local controllability \eqref{l:14}, extending the
  prediction 
  interval $I$ (initially
  $1..N$) 
  to $I'=\underline I..\bar I+{\delta_I}$
  and $\mathcal{X}_{\vec{s}}$ to some
  $\mathcal{X}'\supset\mathcal{X}_{\vec{s}}$ preserves
  \begin{inparaenum}[(i)]
  \item liveness by
    relaxing~$\mDpbDSN[\bar I'](\mathcal{X}',1)$
    (from ${\vec{x}}$, the controller has more space and time to steer
    the system into $\mcrHMQW\rho\setminus\mcrHMQW\alpha$),
  \item safety by
    constraining~$\mDpbDSN[\bar I'](\mathcal{X}',1)$
    only as far as necessary (allowing at most $n$ delays
    \eqref{l:18} with $\delta'=n\delta$, the
    environment has only $n$ stages to accumulate unsafe
    disturbances)
  \end{inparaenum}
  and, thus, solvability.
\end{proof}

\begin{corollary}[Controlled Invariant Set] 
  With
  $\mDpbDSN(\mathcal{X}_{\vec{s}},1) \subseteq
  {\mathbb{X}}\setminus(\mcrHMQW\alpha\oplus\mBuHESu)$,
  $\vec{U}_{\delta,N}$ 
  provides a maximum
  $N$-bounded, robust (i.e., under-approximated),
  non-linear controlled invariant set, independent of
  \eqref{l:14}.
\end{corollary}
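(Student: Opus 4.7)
The plan is to derive the four claimed properties -- invariance, robustness, maximality, and independence from \eqref{l:14} -- by combining \Cref{l:22,l:23} with the exhaustive minimax search in \Cref{l:6} and the single-step displacement bound \eqref{l:7}. The key idea is to treat the family $\{\mDpbDSN(\mathcal{X}_{\vec{s}},k)\}_{k=1}^{N}$ as a shrinking tube of $k$-winning regions, establish forward invariance in the time-augmented state $({\vec{x}},k)$, and then project back to ${\mathbb{X}}$ to obtain the desired controlled invariant set.

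First I would fix ${\vec{x}}\in\mDpbDSN(\mathcal{X}_{\vec{s}},1)$; \Cref{l:23} supplies a control $u^*$ with $\max_{d\in\mkYcAoM}\mijCQyI({\vec{x}}^{ud},2)<\top$, so every successor under $u^*$ lies in $\mDpbDSN(\mathcal{X}_{\vec{s}},2)$. Iterating this one-step argument stage-by-stage and invoking \Cref{l:22} at each step produces a closed-loop trajectory whose states belong to $\mDpbDSN(\mathcal{X}_{\vec{s}},k)$ for every $k\in 1..N$. Next, the single-step displacement bound $\norm{{\vec{x}}^{ud}-{\vec{x}}}\leq\delta$ from \eqref{l:7} together with the hypothesis $\mDpbDSN(\mathcal{X}_{\vec{s}},1)\subseteq{\mathbb{X}}\setminus(\mcrHMQW\alpha\oplus\mBuHESu)$ gives ${\vec{x}}^{ud}\notin\mcrHMQW\alpha$ for every $(u,d)\in\mpBykPY\times\mkYcAoM$, which upgrades the invariance of the previous step into $\delta$-robust safety under adversarial disturbance and delayed observation. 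Maximality then follows because the min-max sweep in Lines~4--6 of \Cref{l:6} is exhaustive over $\mpBykPY\times\mkYcAoM$: every state from which some controller wins is recorded in $\mDpbDSN(\mathcal{X}_{\vec{s}},1)$, so no strictly larger $N$-bounded, $\delta$-robust invariant set is admissible. Finally, none of these arguments invokes \eqref{l:14}; the condition only guarantees that $\mDpbDSN(\mathcal{X}_{\vec{s}},1)$ is sufficiently populated to support \emph{liveness}, whereas the controlled-invariance claim concerns only those states that happen to lie in the winning region, regardless of whether \eqref{l:14} holds.

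The main obstacle is that the term \emph{controlled invariant set} is customarily reserved for a single set $S$ satisfying $S\subseteq\operatorname{Pre}(S)$, whereas the winning regions $\mDpbDSN(\mathcal{X}_{\vec{s}},k)$ shrink monotonically with $k$ by \Cref{l:22}. I would resolve this mismatch by interpreting invariance in the time-augmented state space $({\vec{x}},k)\in{\mathbb{X}}\times 1..N$ and projecting onto ${\mathbb{X}}$ at the end. The hypothesis $\mDpbDSN(\mathcal{X}_{\vec{s}},1)\subseteq{\mathbb{X}}\setminus(\mcrHMQW\alpha\oplus\mBuHESu)$ is exactly what makes this projection clean, because the $\delta$-buffer around $\mcrHMQW\alpha$ covers the worst-case single-step excursion uniformly in the horizon index, so the projected set is forward invariant under the closed-loop dynamics induced by $\vec{U}_{\delta,N}$.
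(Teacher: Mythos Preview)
The paper states this corollary without a dedicated proof; it is meant to fall out of the machinery already assembled, in particular the passage in the appendix proof of \Cref{l:24} where it is observed that ``$\mDpbDSN(\mathcal{X}_{\vec{s}},1)$ is both an $N$-bounded (difference) invariant and variant set of $\hat\mNcTCKG$ under~${\vec{u}}^*_{\vec{s}}$'' together with the $\delta$-margin argument in part~(iii) of that proof. Your write-up unpacks precisely this reasoning---stage-wise forward invariance via \Cref{l:23}, the $\delta$-buffer via \eqref{l:7}, maximality via the exhaustive sweep in \Cref{l:6}, and the observation that \eqref{l:14} enters only through liveness---so your approach is essentially the paper's, made explicit.

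One small imprecision worth fixing: you attribute the nesting $\mDpbDSN(\mathcal{X}_{\vec{s}},k-1)\supseteq\mDpbDSN(\mathcal{X}_{\vec{s}},k)$ to \Cref{l:22}, but that lemma gives value monotonicity conditional on $\mijCQyI({\vec{x}},k-1)<\top$, which is the wrong direction to conclude the set inclusion directly. The nesting is true, but it follows from the characterisation in \Cref{l:23} (winning in at most $N-k$ stages implies winning in at most $N-k+1$), not from \Cref{l:22}. This does not damage your main argument, since the invariance step only uses the one-step successor property ${\vec{x}}\in\mDpbDSN(\mathcal{X}_{\vec{s}},k)\Rightarrow{\vec{x}}^{u^*d}\in\mDpbDSN(\mathcal{X}_{\vec{s}},k+1)$, not the nesting of the regions.
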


\Cref{l:24} establishes sub-problems (i) and (ii) of
\Cref{l:12}.  To finally address sub-problem
(iii), we define
$\mathit{cr}(\bar{\vec{s}})\equiv\bar{\vec{s}}_n|_{\vec{p}}={\vec{p}}_n$ and provide a
solvability theorem applicable for the considered range of
modal games.

\begin{theorem}[Global Game Solvability]
  \label{l:26}
  Given a configuration $\mathsf{C}$, if $\mathsf{C}$ is
  \emph{well-formed} then \Cref{l:11} produces a
  \emph{correct} play $\bar{\vec{s}}$ with
  $\bar{\vec{s}}_0={\vec{s}}_0$, 
  \begin{align*}
    \{\mathit{wf}(\mathsf{C})\}\;
    \bar{\vec{s}}:=\textsc{hybridPlay}[\mathsf{C}](\bar{\vec{s}})\;
    \{\mathit{cr}(\bar{\vec{s}})\}\;.
  \end{align*}
\end{theorem}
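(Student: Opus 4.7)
The plan is to reduce the global correctness of \textsc{hybridPlay} to a chain of modal correctness claims established by \Cref{l:24}, using induction on the waypoint index carried in ${\vec{x}}=({\vec{p}},{\vec{v}},i)$. Observe that \Cref{l:11} alternates between two kinds of steps: an inner modal segment, during which a tactical controller ${\vec{u}}^*_{\vec{s}}$ drives the system from the current state towards the next waypoint, and an outer task update that advances the route index and re-instantiates the modal game $G_{\vec{s}}$ with a refined $F$. The play $\bar{\vec{s}}$ is the concatenation of the trajectories produced in these inner segments, and termination occurs at Line~10 either upon reaching the overall goal $\Omega$ or upon failure, so $\mathit{cr}(\bar{\vec{s}})$ amounts to ruling out the failure branch and showing the goal branch is eventually taken.

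First I would set up the induction: let $i \in 0..n$ index the waypoints of ${\bar{r}}$ and, for each $i$, let ${\vec{s}}^{(i)}$ be the hybrid state immediately after the $i$-th task update (so ${\vec{s}}^{(0)}={\vec{s}}_0$ and the target $i$-th waypoint is ${\vec{p}}_i$). The induction hypothesis is that the play restricted to the first $i$ segments is safe, $\delta$-robust, and terminates with $\bar{\vec{s}}^{(i)}|_{\vec{p}}={\vec{p}}_i$, together with $\mathit{wf}(\mathsf{C}_{{\vec{s}}^{(i)}})$. The base case reduces to checking that the initial hybrid state inherits well-formedness from $\mathit{wf}(\mathsf{C})$; this is immediate because $\delta$-perforation \eqref{l:10}, local controllability \eqref{l:14}, and the delay bound \eqref{l:18} are properties of $\mORCysO$, $\mpBykPY$, $\mkYcAoM$, and $\delta$, none of which the task updates modify, and because ${\vec{x}}_0 \notin \mcrHMQW\alpha$ follows from $\bar{\vec{s}}_0 = {\vec{s}}_0$ being the initial state of the HGA.

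For the inductive step I would apply \Cref{l:24} to the modal configuration $\mathsf{C}_{{\vec{s}}^{(i)}}$, which yields a correct controller ${\vec{u}}^*_{{\vec{s}}^{(i)}}$ whose closed-loop trajectory is $\delta$-robust, safe, and, by liveness, reaches the refined goal region $\mcrHMQW\rho\setminus\mcrHMQW\alpha$ of the $i$-th segment within $N$ stages; by the task refinement $T$, that goal region is a $\delta$-neighbourhood of the next waypoint ${\vec{p}}_{i+1}$, so $\bar{\vec{s}}^{(i+1)}|_{\vec{p}} = {\vec{p}}_{i+1}$ up to the granularity allowed by $\mBuHESu$. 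Preservation of well-formedness across the subsequent task update then follows from \Cref{l:25}: scope and horizon extensions only relax liveness and restrict safety monotonically, and the global invariants $\delta$-perforation and local controllability are preserved verbatim. Concatenating the segment trajectories preserves safety and $\delta$-robustness globally because each segment begins in the safe winning region $\mDpbDSN(\mathcal{X}_{{\vec{s}}^{(i)}},1)$ of the next modal game. Since the route ${\bar{r}}$ is finite of length $n$ and each segment terminates in at most $N$ stages (with possible scope extensions bounded by $|\mathcal{X}|$), the induction closes at $i=n$ with $\bar{\vec{s}}_n|_{\vec{p}} = {\vec{p}}_n$, which is exactly $\mathit{cr}(\bar{\vec{s}})$.

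The main obstacle I anticipate is the seam between modal games: one must verify that the final state of segment $i$, as produced by ${\vec{u}}^*_{{\vec{s}}^{(i)}}$, actually lies in $\mDpbDSN(\mathcal{X}_{{\vec{s}}^{(i+1)}},1)$ so that \Cref{l:24} can be reapplied at step $i{+}1$. This is where $\delta$-perforation of $T$ and the heuristic horizon \eqref{l:19} interact with the scope adaptation inside $\vec{U}_{\delta,N}$: if the refined goal region of segment $i$ is positioned within a $\delta$-tube shared with segment $i{+}1$, the handover is automatic; otherwise, one has to invoke \Cref{l:25} to extend the scope of the next modal game until its winning region covers the handover point, terminating because the state space ${\mathbb{X}}$ is finite. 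A secondary obstacle is ruling out the failure branch of \textsc{hybridPlay} (Line~10): it suffices to note that a failure can only be triggered by entering $\mcrHMQW\alpha$ or by an empty winning region, and both are excluded by $\mathit{cr}({\vec{u}}^*_{{\vec{s}}^{(i)}})$ together with $\mathit{wf}(\mathsf{C}_{{\vec{s}}^{(i)}})$ at each step.
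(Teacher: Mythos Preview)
Your proposal is correct and follows essentially the same approach as the paper: both reduce global correctness to repeated application of \Cref{l:24} along the finite route~${\bar{r}}$, with the waypoint index~$i$ being advanced at each jump until $\Omega$ is reached. Your explicit induction on~$i$ and your careful treatment of the segment handover (invoking \Cref{l:25} for scope extension and arguing that well-formedness is preserved across task updates) make rigorous what the paper's proof states more tersely as ``(i) the route ${\bar{r}}$ being finite, (ii) termination of $\vec{U}$ by \Cref{l:24} after the safe approach of each waypoint~$i$, and (iii) $i$ being incremented with each jump''; the paper does not explicitly discuss the seam issue you flag, so your version is arguably more complete on that point.
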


\begin{proof}[Sketch \Cref{l:26}]
  The proof demonstrates the existence and correctness of a fixpoint
  $\bar{\vec{s}}$ and is provided in full detail in
  \Cref{l:29}.
\end{proof}

\section{Discussion}
\label{l:27}

\paragraph{From Non-Stationary to Quasi-Stationary Controllers.}

If we reach the 
actual value fixpoint $\mijCQyI[\infty]$ in stage
$k^\infty\in 1..N$, we obtain a stationary,
infinite-horizon~${\vec{u}}^\infty$, which provides the
$\mathcal{X}$-optimal input independent of time, forming a winning
strategy as long as it has \emph{at least $N-k^\infty$
  stages} to solve the problem.

Otherwise, we obtain a non-stationary, $(\mathcal{X},k)$-optimal,
terminal-time controller.  This is one, which, at each stage
$k$, provides an $(\mathcal{X},k)$-optimal input, forming a
winning strategy, given \emph{exactly $N-k$ stages} to
solve the problem. 

However, applying a fixpoint approximation in \Cref{l:6}, such as
\eqref{l:8}, and reaching the quasi-fixpoint
$\mijCQyI[*]$ in stage $k^*$, we obtain a 
quasi-stationary~${\vec{u}}^*$.  This is one, which provides a
\emph{near-$\mathcal{X}$-optimal} input, forming a winning strategy as
long as it requires \emph{at most $N-k^*$ stages} to
solve the problem.

By \Cref{l:22}, we can use ${\vec{u}}({\vec{x}},k)$
across all the stages in $k..N$\footnote{That is, it
  can be used for a maximum of $N-k$ stages.}
while preserving correctness and benefitting from lowering the space
complexity $\mmHxeoG{N\mathcal{X}}$
of the computation and storage of ${\vec{u}}$ to
$\mmHxeoG{\mathcal{X}}$. 
Regarding optimality, for $k<k'$,
${\vec{u}}({\vec{x}},k)$ performs at least as good as
${\vec{u}}({\vec{x}},k')$
\cite[Fig.~9]{Gleirscher2025-ParametricModelOptimal}.
Moreover, using ${\vec{u}}({\vec{x}},k^*)$ allows one to omit the
computation of the stages $1..k^*$.
However, one reason to prefer the non-stationary over the
(quasi-)stationary controller could be situations where
${\vec{u}}$ is computed in the presence of time-varying dynamics.

\paragraph{Requirements on Shielding imposed by the Discretisation.}

Unsurprisingly, our choice of DDP simplifies shielding.  In
particular, for pre-shielding in
Line~8 of
\Cref{l:6}, we use a single-step check against
entering~$\mcrHMQW\alpha$.
More sophisticated shields can 
be applied \cite{Alshiekh2018-SafeReinforcementLearning}.
In certain 
reinforcement learning~(RL) schemes, temporal property shielding (e.g., safe or co-safe
LTL) can be used for achieving correctness.  However, \Cref{l:6}
reduces this issue to single-step shielding (e.g., ensure
$\mAoguFS{X}\neg\alpha$ to obtain $\mAoguFS{G}\neg\alpha$) against
unsafe jumps due to the discretisation, 
of course under the assumption that the winning condition~$F$
adequately captures safety and performance requirements.

\paragraph{Generalisation and Scalability.}

Our approach is related to safe variants of, usually faster, online-RL
techniques: The structure of $F$
\eqref{l:4} corresponds to a weighted barrier
function. Hence, safety and liveness are preserved under
value function approximation as long as the replacement of
\Cref{l:6} (e.g., by model-free deep-RL) keeps
using a temporal safety pre-shield and $\delta$-perforation
\eqref{l:10} holds with $\delta$ according
to~\eqref{l:15}.  Any significant gain in
scale while supporting general settings comes with the issues
summarised in
\cite[Sec.~1]{Gleirscher2025-ParametricModelOptimal}. 

\paragraph{Relating
  \Cref{l:24,l:26} to Aerial
  Delivery \cite{Gleirscher2025-ParametricModelOptimal}.}

Waypoint areas ${\vec{p}}_i\oplus\mBuHESu$ (incl.\ ${{\vec{x}}_0}$ and
${\vec{p}}_n$) are required to be free of static obstacles.
For flexible transitions between route segments, we initialise
$\Phi$ (Line~3 of \Cref{l:6})
with a cuboid around the next segment,
including~${\vec{p}}_i\oplus\mBuHESu$.

Regarding robustness \eqref{l:15}, the aerial
delivery setting terminates in
$(\mcrHMQW\rho\setminus\mcrHMQW\alpha)\oplus\mBuHESu$ because of the
single-delay scheme with $\mBuHESu=[\pm\delta]^3=[\pm 2]^3$.
This is safe as we define $\alpha$ such that 
$\mcrHMQW\rho\setminus\mcrHMQW\alpha$ is outside
$\mORCysO\oplus\mBuHESu$.

Control towards $v_{\min}$ (i.e., braking) to permit fine
$\delta$-perforation (i.e., small $\delta$) can increase the
minimum $N$ required for liveness.  Determining a
maximum
$N$ may need scenario-specific experimentation.

Regarding the upper bound for step-width ($\mBuHESu[]$) in
\Cref{l:6}, it can be seen from
\eqref{l:7} that a hypothetical
per-component $\delta$ might be much smaller than $K_*$.
Particularly, given a restriction of
$\hat\mNcTCKG$ 
to maintain $\norm{{\vec{v}}}\leq v_{\max}$, we can reduce
$\mathcal{C}$ in \eqref{l:7} to the
maximum absolute changes per time unit and assume
$\mBuHESu[]=[\pm v_{\max}]^3\times[\pm 2]^3\times\{0,1\}$ for the
coarsest trajectories.  For that, physical limits (e.g., of~${\vec{v}}$)
are assumed to be over-approximated (e.g., $v_{\max}$) during
$\mathcal{X}$-constrained exploration in Line~6.

\paragraph{Potential Relevance for Certification.}

For example, our argument for controller robustness and our
formalisation of online synthesis respond to requirement~2110
(providing a robust flight envelope) as well as requirement~2265
(quality assurance) of EASA
SC-VTOL-01~\cite{EASA2019-SpecialConditionsmall}, an airworthiness
standard applicable for type certification of person-carrying small
aerial vehicles.

\section{Conclusion}
\label{l:28}

This work extends our work in
\cite{Gleirscher2025-ParametricModelOptimal} by a solvability (i.e.,
controller existence) theorem for approximate (i.e., stage-bounded,
discretised) modal reach-avoid games.  This theorem uses a numerical
HJI equation as a generic inductive invariant of the
incrementally computed winning region. 
Necessary and sufficient conditions for solvability are
identified.
This invariant forms an approximate control barrier certificate,
such that the synthesised 
controllers guarantee $\delta$-robust bounded correctness, in
particular, safety for at least $N$ stages and liveness by
reaching goals within $N$ stages; both under rejection of
delayed bounded disturbance.
Under the identified conditions, we obtain a decision
procedure 
for approximate 
modal reach-avoid games using a DDP-based
$\mmHxeoG{N\majimMG{{\mathbb{X}}}\majimMG{\mpBykPY}\majimMG{\mkYcAoM}}$-algorithm.

In future work, we will enhance our reach-avoid reasoning into a
richer deductive framework, feeding into an assurance workflow for
cyber-physical system. 
One of the obvious performance questions will be to investigate the
most beneficial ways in which variants of 
approximate dynamic programming, model-predictive control, and deep-RL can preserve
\Cref{l:24}.

\bibliographystyle{plain}
\bibliography{main}

\appendix
  \appendix

\section{Proofs}
\label{l:29}
  
\begin{proof}[\Cref{l:23}]
  The proof of the reach-avoid-invariant winning region is by reverse
  induction on $k$.

  \emph{Induction start} ($k=N$):
  From \eqref{l:5} and \eqref{l:4}, we
  obtain $\mijCQyI({\vec{x}},N) =\Phi({\vec{x}})$ (cf.
  Line~3 of \Cref{l:6}) and
  $\{{\vec{x}}\in\mathcal{X}\mid
  \mijCQyI({\vec{x}},N)<\top\} =
  \{{\vec{x}}\in\mathcal{X}\mid
  \Phi({\vec{x}})<\top\}$ contains exactly the states in
  $\mcrHMQW\rho\setminus\mcrHMQW\alpha$, those reachable in 0 stages,
  hence, $\mDpbDSN(\mathcal{X},N)$.
  Given
  $L(\cdot,N)=0$,
  \eqref{l:21} also satisfies the
  induction start:
  \begin{align*}
    \mijCQyI({\vec{x}},N-1) -
    \mijCQyI({\vec{x}},N)
    &=
    \min_{u\in\mpBykPY}
    \max_{d\in\mkYcAoM}
    \big\{
    L({\vec{x}},u,d,N)%
    +
    \mijCQyI({\vec{x}}^{ud},N) -
    \mijCQyI({\vec{x}},N)
    \big\}
    \\\iff\qquad
    \mijCQyI({\vec{x}},N-1)
    &=
      \min_{u\in\mpBykPY}
      \max_{d\in\mkYcAoM}
      \Phi({\vec{x}}^{ud})
  \end{align*}
  If
  ${\vec{x}}\in\mDpbDSN(\mathcal{X},N)$
  then by \Cref{l:22}, we have
  $\mijCQyI({\vec{x}},N-1)<\top$ and, thus, also
  ${\vec{x}}^{ud}\in
  \mcrHMQW\rho\setminus\mcrHMQW\alpha$. 
  \hfill\resizebox{.5em}{.6em}{/\kern-.5em$\Box$}

  \emph{Induction step} ($k\to k-1$): Given
  $\mDpbDSN(\mathcal{X},k)$ according to
  \Cref{l:23} (induction hypothesis, IH),
  show that
  $\mDpbDSN(\mathcal{X},k-1)$
  satisfies \Cref{l:23}.

  We first show that
  $\{{\vec{x}}\in\mathcal{X}\mid
  \mijCQyI({\vec{x}},k-1)<\top\} \subseteq
  \mDpbDSN(\mathcal{X},k-1)$.
  Fixing ${\vec{x}}\in\mathcal{X}$, we use both sides of
  \eqref{l:21}:
  \begin{itemize}
  \item By IH, $\mijCQyI({\vec{x}},k)<\top$.  From
    \Cref{l:22} and the left-hand side~(LHS) of
    \eqref{l:21}, we then obtain
    $-\top<\mijCQyI({\vec{x}},k-1) -
    \mijCQyI({\vec{x}},k)\leq 0$.
  \item 
    From the right-hand side~(RHS) of \eqref{l:21}, we now require
    \begin{align*}
      &\min_{u\in\mpBykPY}
      \max_{d\in\mkYcAoM}
      \big\{
      L({\vec{x}},u,d,k)%
      +
      \mijCQyI({\vec{x}}^{ud},k) -
      \mijCQyI({\vec{x}},k)
      \big\}
      \leq 0
      \\\iff&
      \min_{u\in\mpBykPY}
      \max_{d\in\mkYcAoM}
      \big\{
      L({\vec{x}},u,d,k)%
      +
      \mijCQyI({\vec{x}}^{ud},k)
      \big\}
      \leq\mijCQyI({\vec{x}},k)\;.
    \end{align*}
    By IH, both summands must be $<\top$.  By
    \eqref{l:4},
    \begin{enumerate}[(i)]
    \item ${\vec{x}}$ must be outside $\mcrHMQW\alpha$ (ensured by safety
      pre-shielding),
    \item[(ii)] ${\vec{x}}$ is either in the goal region
      $\mcrHMQW\rho\setminus\mcrHMQW\alpha$ or given a non-$\top$
      weight, and
    \item[(iii)] $\mijCQyI({\vec{x}}^{ud},k)<\top$,
      which, by IH, means a move to
      ${\vec{x}}^{ud}\in\mDpbDSN(\mathcal{X},k)$.
    \end{enumerate}
  \end{itemize}
  Consequently,
  $\mDpbDSN(\mathcal{X},k-1)$
  \emph{contains} the states from where the system can reach
  $\mcrHMQW\rho\setminus\mcrHMQW\alpha$ in at most
  $N-(k-1)$ stages while remaining
  outside~$\mcrHMQW\alpha$.
  \hfill\resizebox{.5em}{.6em}{/\kern-.5em$\Box$}

  We then show that
  $\mDpbDSN(\mathcal{X},k-1)
  \subseteq
  \{{\vec{x}}\in\mathcal{X}\mid
  \mijCQyI({\vec{x}},k-1)<\top\}$.
  Proceeding by contradiction, we state that there is an
  ${\vec{x}}\in \mDpbDSN(\mathcal{X},k-1)$ with
  $\mijCQyI({\vec{x}},k-1)=\top$.  From that
  ${\vec{x}}$, the system either cannot reach
  $\mcrHMQW\rho\setminus\mcrHMQW\alpha$ in at most
  $N-(k-1)$ stages or it
  enters~$\mcrHMQW\alpha$ before that.
  Then, by \Cref{l:22}, we have
  $\mijCQyI({\vec{x}},k)=\top$. 
  With the LHS of \eqref{l:21} becoming
  $\top-\top=0$, we now require the RHS of
  \eqref{l:21} to satisfy
  \begin{align*}
    \min_{u\in\mpBykPY}
    \max_{d\in\mkYcAoM}
    \big\{
    L({\vec{x}},u,d,k)%
    +
    \mijCQyI({\vec{x}}^{ud},k)
    \big\}
    & \leq\top\;.
  \end{align*}
  But then at least one summand must be brought to $0$ or both to
  $<\top$.  Consequently, either
  \begin{inparaenum}[(i)]
  \item the move to
    ${\vec{x}}^{ud}$ must again result in
    $\mijCQyI({\vec{x}}^{ud},k)<\top$
    (contradiction),
  \item[(ii-a)]
    $L({\vec{x}},u,d,k)%
    <\top$, meaning $\mcrHMQW\rho\setminus\mcrHMQW\alpha$ has
    been reached already (contradition), \emph{or}
  \item[(ii-b)] a \emph{safe move} (with non-$\top$ weight) to
    some ${\vec{x}}^{ud}$.  However, because
    of the coupling of (ii-b) and (i) via $({\vec{u}}^*,{\vec{d}}^*)$,
    ${\vec{x}}^{ud}$ must be outside
    $\mDpbDSN(\mathcal{X},k)$
    (otherwise, IH would apply; contradiction).  But that would
    again contradict \eqref{l:21}.
  \end{inparaenum}
  
  This closes the proof and \Cref{l:23}
  specifies an inductive invariant.
\end{proof}

\begin{lemma}
  \label{l:30}
  Given $k\geq l\geq1$, we have that
  $\widehat{\mathit{fp}}_{\vec{U}}(k) \implies
  \widehat{\mathit{fp}}_{\vec{U}}(l)$.
\end{lemma}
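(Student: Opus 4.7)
The plan is to unfold the definition of $\widehat{\mathit{fp}}_{\vec{U}}$ and observe that the claim reduces to a monotonicity property of the existential quantifier over a lower-bounded index set. Recall from the discussion around \eqref{l:8} that
\begin{align*}
  \widehat{\mathit{fp}}_{\vec{U}}(k) \equiv \exists k' \geq k \colon \top \not\in \mijCQyI({\vec{x}} \oplus \mBuHESu, k')\,.
\end{align*}
So the hypothesis $\widehat{\mathit{fp}}_{\vec{U}}(k)$ supplies an explicit witness $k^\star \geq k$ with $\top \not\in \mijCQyI({\vec{x}} \oplus \mBuHESu, k^\star)$.

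First I would instantiate the existential with this witness $k^\star$. By transitivity of $\geq$, the assumption $k \geq l$ combined with $k^\star \geq k$ yields $k^\star \geq l$. Hence $k^\star$ also lies in the index range required by the existential in $\widehat{\mathit{fp}}_{\vec{U}}(l)$, and the property $\top \not\in \mijCQyI({\vec{x}} \oplus \mBuHESu, k^\star)$ is exactly what is needed for the witness clause. Therefore $\widehat{\mathit{fp}}_{\vec{U}}(l)$ holds.

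There is essentially no obstacle here; the lemma is a structural consequence of the fact that the predicate $\widehat{\mathit{fp}}_{\vec{U}}$ is defined as an existential over a half-line $\{k' \in \mathbb{N}_+ \mid k' \geq k\}$, and these half-lines are nested decreasingly in $k$. The only subtlety worth spelling out in the final writeup is that the witness $k^\star$ is produced without reference to $l$, so no additional well-formedness or controllability assumption from \Cref{l:13} is needed. I would keep the proof to one or two sentences, noting explicitly the reuse of the same witness $k^\star$ for both predicates, and using \Cref{l:22} is \emph{not} required here (it would only be invoked if we needed to propagate the non-$\top$ property from $k^\star$ to a smaller stage, which is not what the lemma asserts).
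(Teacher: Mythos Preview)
Your proposal is correct and matches the paper's own proof essentially line for line: both unfold the definition of $\widehat{\mathit{fp}}_{\vec{U}}$, use the assumption $k\geq l$ to enlarge the index range of the existential, and refold. Your remark that \Cref{l:22} is not needed here is also accurate.
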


\begin{proof}[\Cref{l:30}]
  \begin{align*}
    \widehat{\mathit{fp}}_{\vec{U}}(k)
    \equiv\;
    & \exists k'\geq k\colon
      \top\not\in\mijCQyI({\vec{x}}\oplus\mBuHESu,k')
      \tag{by def}
    \\\implies
    & \exists k'\geq l\colon
      \top\not\in\mijCQyI({\vec{x}}\oplus\mBuHESu,k')
      \tag{by assumption}
    \\\equiv\;
    & \widehat{\mathit{fp}}_{\vec{U}}(l)
  \end{align*}
\end{proof}

\begin{proof}[\Cref{l:24}]
  We separately establish each of the four post-conditions (i) to (iv):
  \begin{enumerate}[(i)]
  \item Assuming well-formedness $\mathit{wf}(\mathsf{C}_{\vec{s}})$, via
    \Cref{l:23}, ${\vec{u}}^*_{\vec{s}}$ guarantees
    step-wise \emph{progress} for $k\in1..N$ and,
    thus, \emph{liveness} for
    $k=1$. 
    \hfill\resizebox{.5em}{.6em}{/\kern-.5em$\Box$}
    
  \item Via \Cref{l:23}, ${\vec{u}}^*_{\vec{s}}$
    guarantees \emph{safety} for $k\in1..N$.  At each
    stage $k$, Line~6 of \Cref{l:6}
    explores all possible moves.  The computation of $L$ in
    Line~8 includes predictive run-time
    enforcement for excluding incorrect actions, in particular,
    \begin{inparaenum}[(a)]
    \item implicit pre-shielding from visiting unsafe states and
    \item explicit pre-shielding from crossing\footnote{The split into
        implicit and explicit pre-shielding merely results from
        numerical approximation.  Moreover, the supervisor not
        discussed here represents a post-shield.} unsafe states.
    \end{inparaenum}
    \hfill\resizebox{.5em}{.6em}{/\kern-.5em$\Box$}

  \item ${\vec{u}}^*_{\vec{s}}$ guarantees \emph{$\delta$-robust tracking}
    under $n$-stage delayed 
    feedback,
    given $\delta$ satisfies \eqref{l:14} and
    \eqref{l:17} and $n$ according to
    \eqref{l:18}.  From Line~7
    and \eqref{l:4}, we observe that
    $\mijCQyI({\vec{x}},k) < \top$ for states ${\vec{x}}$
    with a distance to $\alpha$ of at least~$\delta$.
    We also obtain the \emph{vector-field certificate}
    \[
      \forall k\in1..N,\,{\vec{x}}\in
      \underbrace{
        \mDpbDSN(\mathcal{X}_{\vec{s}},k)
      }_{\parbox{2cm}{\scriptsize\centering
          $k$-winning\\region}} 
      \colon
      \underbrace{
        {\vec{x}}^{{{\vec{u}}^*_{\vec{s}}({\vec{x}},k)}\mkYcAoM}|_{\vec{p}}
      }_{\parbox{2cm}{\scriptsize\centering single-step\\successor}} 
      \cap
      \underbrace{
        \mcrHMQW\alpha%
      }_{\parbox{1cm}{\scriptsize\centering
          unsafe\\region}}%
      =\varnothing\;,
    \]
    which can be checked point-wise where desired, for example, at the
    boundary
    $\partial\mDpbDSN(\mathcal{X}_{\vec{s}},N)$
    of the $N$-winning region.
    Note that the induction start of
    \Cref{l:23} only allows for a
    single-step delay.
    \hfill\resizebox{.5em}{.6em}{/\kern-.5em$\Box$}
  
  \item \emph{Near-optimality} of ${\vec{u}}^*_{\vec{s}}$ is
    established in Line~8 of \Cref{l:6} by
    choosing control inputs point-wise minimising $\lambda$.
    $\vec{U}$ 
    fails, however, to achieve global optimality for
    ${\vec{u}}^*_{\vec{s}}$ because the search for the minimum-weight
    (not shortest) path ends with achieving
    $\mijCQyI({\vec{x}},1)<\top$.  \hfill\resizebox{.5em}{.6em}{/\kern-.5em$\Box$}
  \end{enumerate}
  $\vec{U}$ 
  provides a solution to \eqref{l:21}
  for $\mijCQyI$ (and ${\vec{u}}^*_{\vec{s}}$).
  By \Cref{l:23},
  $\mDpbDSN(\mathcal{X}_{\vec{s}},1)$ is the
  (controlled) reach-avoid $N$-winning region.  Everywhere in
  $\mDpbDSN(\mathcal{X}_{\vec{s}},1)$, ${\vec{u}}^*_{\vec{s}}$ can
  be applied safely for at least $N$ steps and the value
  difference will be non-positive.
  Consequently,
  $\mDpbDSN(\mathcal{X}_{\vec{s}},1)$ is both an
  $N$-bounded (difference) invariant and variant set of
  $\hat\mNcTCKG$ under~${\vec{u}}^*_{\vec{s}}$.

  The \emph{existence} of ${\vec{u}}^*_{\vec{s}}$ equals to the
  termination of the assignment
  ${\vec{u}}^*_{\vec{s}}:=
  \vec{U}_{\delta,N}(G_{\vec{s}})$.
  According to the condition $\widehat{\mathit{fp}}_{\mathrm{dDP}}$ for
  premature termination in
  Line~9 of \Cref{l:6},
  $\vec{U}$ 
  terminates either (i) after a maximal 
  extension of $\mathcal{X}_{\vec{s}}\subseteq{\mathbb{X}}$ and
  $N$ or (ii) when $\widehat{\mathit{fp}}_{\vec{U}}(k)$
  is met.
  $\delta$-perforation \eqref{l:10} ensures (ii) to
  hold for some $k$ and the corollary
  $\widehat{\mathit{fp}}_{\vec{U}}(k)\implies
  \widehat{\mathit{fp}}_{\vec{U}}(1)$ of
  \Cref{l:30} ensures a \emph{non-empty}
  $\mDpbDSN(\mathcal{X}_{\vec{s}},1)$ including~${{\vec{x}}_0}$.
\end{proof}

\begin{proof}[\Cref{l:26}]
  The theorem states that, given $\mathit{wf}(\mathsf{C})$ and if the
  play $\bar{\vec{s}}$ corresponds to the fixpoint of
  $\bar{\vec{s}}:=\textsc{hybridPlay}[\mathsf{C}](\bar{\vec{s}})$,
  then $\bar{\vec{s}}$ is correct, $\mathit{cr}(\bar{\vec{s}})$.
  For our 
  hybrid game setting, we show that this fixpoint exists and is
  correct.

  $\bar{\vec{s}}$ exists if \Cref{l:11} terminates.
  For that, we require the loop exit condition
  (\Cref{l:6} Line~9)
  to be established.
  We can infer $\Omega$ from
  \begin{inparaenum}[(i)]
  \item the route ${\bar{r}}$ being finite,
  \item the termination of
    $\vec{U}$ 
    by \Cref{l:24} 
    after the safe approach of each waypoint~$i$, 
    and
  \item $i$ being incremented with \emph{each}
    jump in $G$ except for $\mTorQlx{finish}$.
  \end{inparaenum}
  The other lines in \Cref{l:11} are 
  assumed to be terminating.
  If \Cref{l:11} terminated early for the other three
  reasons, some of the well-formedness conditions must have been
  violated. 
  \hfill\resizebox{.5em}{.6em}{/\kern-.5em$\Box$}
    
  Reaching waypoint~$n$,
  $\bar{\vec{s}}_n|_{\vec{p}}={\vec{p}}_n$, 
  $\mTorQlx{finish}$ is enabled and establishes $\Omega$.  Hence,
  the fixpoint $\bar{\vec{s}}$ is correct.
\end{proof}

 \end{document}